\documentclass[lettersize, journal]{IEEEtran}
\usepackage{amsmath,amsfonts}
\usepackage{algorithm2e}
\usepackage{array}
\usepackage[caption=false,font=normalsize,labelfont=sf,textfont=sf]{subfig}
\usepackage{textcomp}
\usepackage{stfloats}
\usepackage{url}
\usepackage{verbatim}
\usepackage{graphicx}
\usepackage{cite}
\usepackage{subfig} 
\usepackage{makecell}
\usepackage[top=0.69in, bottom=1.02in, left=0.63in, right=0.63in]{geometry}
\usepackage{multirow} 

\usepackage{amsthm}

\usepackage{graphicx} 
\usepackage{amsmath}
\usepackage{amsmath}
\usepackage{amsmath, amssymb}
\usepackage{xcolor}	
\usepackage{amsmath} 

\usepackage{algpseudocode}
\usepackage{graphics}
\usepackage{amsmath, amsthm}

\theoremstyle{plain}
\newtheorem{theorem}{\textit{Theorem}}
\newtheorem{lemma}{Lemma}

\begin{document}

	\title{Copula-Based Analysis of Fluid Antenna-Assisted Over-the-Air Computation}
	
	\author{IEEE Publication Technology,~\IEEEmembership{Staff,~IEEE,}
	}

	\author{Saeid Pakravan, Mohsen Ahmadzadeh, Wessam Ajib, Ghosheh Abed Hodtani, Ming Zeng

\thanks{S. Pakravan and M. Zeng are with the Department of Electric and Computer Engineering, Laval University, Quebec City, QC, Canada. email: saeid.pakravan.1@ulaval.ca; ming.zeng@gel.ulaval.ca.}

	\thanks{M. Ahmadzadeh and G. Abed Hodtani are with the Department of Electric and Computer Engineering, Ferdowsi University, Mashhad, Iran. email: m.ahmadzadehbolghan@mail.um.ac.ir; hodtani@um.ac.ir.}

\thanks{W. Ajib is with the Department of Computer Science, University of Quebec in Montreal (UQAM), Montreal, QC, Canada. email: ajib.wessam@uqam.ca.}

	}

	\maketitle

	\begin{abstract}
		
This letter studies an uplink over-the-air computation (AirComp) framework in which multiple user equipments are equipped with fluid-antenna (FA) arrays and operate over spatially correlated fading channels. By explicitly modeling channel dependence using the Gumbel copula, closed-form analytical expressions are derived for the cumulative distribution function (CDF) of the mean-squared error (MSE) of the aggregated function. The proposed analysis provides a quantitative performance characterization of AirComp under spatial correlation and provides analytical insights into the role of FA-assisted transmission in correlated wireless environments. Numerical results validate the derived expressions and show that FA deployment can substantially reduce the MSE compared with conventional fixed-antenna systems, although the achievable gain decreases as spatial correlation becomes stronger.

\end{abstract}

	\begin{IEEEkeywords}
		Over-the-air computation, fluid antenna systems, spatial correlation, copula modeling. 
	\end{IEEEkeywords}

\section{Introduction}

Over-the-air computation (AirComp) has emerged as an enabling paradigm for efficient wireless data aggregation by exploiting the signal superposition property of multiple-access channels~\cite{10092857}. By allowing multiple devices to transmit analog signals simultaneously and aggregating them directly in the air, AirComp eliminates the need for orthogonal channel access and explicit decoding of individual messages, thereby significantly reducing communication latency and improving spectral efficiency~\cite{10538293}. These characteristics make AirComp particularly attractive for real-time and large-scale data aggregation in Internet-of-Things (IoT) networks~\cite{9535447}. However, the performance of AirComp is inherently sensitive to wireless impairments such as channel fading, interference, and noise, which directly degrade the accuracy of function aggregation.

To address these challenges, extensive research has focused on propagation-aware AirComp designs that aim to minimize the mean-squared error (MSE) of the aggregated signal. Representative approaches include the use of reconfigurable intelligent surfaces (RIS)~\cite{10974462} and advanced beamforming or directional transmission techniques in multi-antenna systems~\cite{li2023integrated}, which exploit spatial diversity and controllable propagation environments to enhance aggregation reliability. These methods improve AirComp performance by shaping the wireless environment or concentrating signal energy along favorable propagation directions. While highly effective, such approaches typically rely on infrastructure support or coordinated transmission strategies, which may introduce additional system complexity or deployment constraints in highly dynamic scenarios.

Recently, fluid-antenna (FA) systems have been introduced as a promising alternative for enhancing spatial diversity in compact wireless devices~\cite{Pakravan20261, 10906511, 11106811, zhang2024fluid}. By enabling antenna elements to dynamically reposition within a confined region, FA systems provide additional spatial degrees of freedom without increasing hardware footprint. In contrast to infrastructure-assisted approaches such as RIS or beamforming, FA systems enable device-level spatial reconfigurability, allowing each user to locally adapt to channel variations through antenna position selection. This capability has motivated the integration of FA arrays into AirComp frameworks, demonstrating improved robustness and adaptability under time-varying channel conditions~\cite{saeidpwcnc,pakravan2024robust, 10729877, ahmadzadeh2025ai}. Compared with conventional fixed-position antennas (FPA), FA-assisted transmission offers greater flexibility in exploiting favorable channel realizations, which is particularly beneficial for analog aggregation schemes such as AirComp.

Despite these advances, the theoretical understanding of FA-assisted AirComp remains incomplete, especially with respect to the role of spatial correlation among FA ports. In practical implementations, FA ports are closely spaced within a compact region, leading to non-negligible spatial correlation that fundamentally affects channel statistics and aggregation performance~\cite{10623405, 10678877, huangfu2025performance}. Existing works often rely on simplified independence assumptions or numerical evaluations, leaving a gap in rigorous analytical characterization.

In this paper, we address this gap by developing a unified analytical framework for FA-assisted AirComp systems under spatially correlated fading channels. In contrast to existing works that treat FA systems, AirComp, or correlation modeling in isolation, the proposed approach jointly integrates these aspects into a single tractable framework. Using copula theory~\cite{10678877, 10319727}, we first derive analytical expressions for the joint cumulative distribution function (CDF) of FA channel gains under correlated Rayleigh fading, explicitly capturing the statistical dependence among FA ports. Based on this framework, compact closed-form expressions are obtained for the CDF of the aggregation MSE. While the present derivation is specialized to correlated Rayleigh fading, the copula-based framework can be extended to other marginal fading distributions. This enables a direct and fair comparison between FA-assisted and conventional FPA-based AirComp systems. Extensive Monte-Carlo (MC) simulations, driven by copula-based random generation, validate the analytical results and demonstrate the pronounced impact of spatial correlation on AirComp performance, highlighting the substantial MSE reduction achievable through FA deployment under different correlation regimes.
To better highlight the novelty of the proposed framework, Table~\ref{tab:comparison} summarizes the key differences between this work and existing studies.

\begin{table}[t]
\caption{Comparison with Existing Works}
\label{tab:comparison}
\centering
\footnotesize
\setlength{\tabcolsep}{4pt}
\renewcommand{\arraystretch}{1.15}

{
\begin{tabular}{c|c|c|c|c|c}
\hline
\textbf{Ref.} & \textbf{AirComp} & \textbf{FA} & \textbf{Corr.} & \textbf{Copula} & \textbf{Closed-Form MSE} \\
\hline

{[1]--[5]}    & \checkmark & \texttimes & \texttimes & \texttimes & \texttimes \\
{[6]--[8]}    & \texttimes & \checkmark & \texttimes & \texttimes & \texttimes \\
{[9]--[13]}  & \checkmark & \checkmark & \texttimes & \texttimes & \texttimes \\
{[14], [16]}  & \texttimes & \checkmark & \checkmark & \texttimes & \texttimes \\
{[15], [17]}  & \texttimes & \checkmark & \checkmark & \checkmark & \texttimes \\
\hline
\textbf{This Work} 
& \checkmark & \checkmark & \checkmark & \checkmark & \checkmark \\
\hline
\end{tabular}
}

\vspace{1ex}
\begin{minipage}{0.95\linewidth}
\footnotesize
{\textbf{Corr.}: Spatial correlation modeling.}
\end{minipage}
\end{table}

\section{System Model}

We consider an uplink AirComp system consisting of $K$ single-antenna user equipments (UEs), denoted by $\mathrm{UE}_k$, $k \in \mathcal{K} \triangleq \{1,2,\ldots,K\}$, communicating with an access point (AP) equipped with a single FPA. Each UE is equipped with a single FA, which
enables dynamic selection among multiple antenna ports located within a compact
spatial region.

Each $\mathrm{UE}_k$ aims to transmit a data vector $\mathbf{s}_k \in \mathbb{C}^{r}$, where $r$ denotes the signal dimension. The transmitted signals satisfy $\mathbb{E}[\mathbf{s}_k]=\mathbf{0}$, $\mathbb{E}[\mathbf{s}_k \mathbf{s}_k^{H}]=\mathbf{I}_r$, and $\mathbb{E}[\mathbf{s}_k \mathbf{s}_j^{H}]=\mathbf{0}$ for $k \neq j$. The vector $\mathbf{s}_k$ represents $r$ function values to be aggregated and is transmitted over $r$ orthogonal channel uses (e.g., time slots or subcarriers), which is consistent with practical single-RF-chain FA implementations.
The objective of AirComp is to compute the aggregated statistic of the transmitted data vectors at the AP~\cite{zhang2024fluid}, given by
\begin{equation}
	\mathbf{s} =\sum_{k=1}^{K} \mathbf{s}_k.
\end{equation}

Each FA consists of $N$ antenna ports uniformly distributed along a linear aperture of length $W\lambda$, where $\lambda$ denotes the carrier wavelength. The relative position of the $n$-th port is 
\begin{equation}
	d_n = \frac{n-1}{N-1} W\lambda, \qquad n = 1,2,\ldots,N.
\end{equation}

Let $h_{k,n}$ denote the channel coefficient between the $n$-th FA port of $\mathrm{UE}_k$ and the AP. Due to the compact spatial arrangement of the FA ports, the channel coefficients across different ports exhibit spatial correlation~\cite{10678877, 10319727}.\footnote{{The adopted model captures spatial correlation among FA ports but does not explicitly account for electromagnetic mutual coupling. This abstraction is used to preserve analytical tractability. Incorporating mutual coupling into the FA channel model, particularly for dense port configurations, is an important direction for future work.}}

At each transmission instance, each UE selects the FA port that maximizes its instantaneous
channel power gain~\cite{huangfu2025performance}.\footnote{{The model assumes full CSI across FA ports for optimal selection. In practice, this may incur non-negligible overhead, especially in fast-fading environments. This overhead can be reduced via partial port scanning, low-complexity selection, or learning-based prediction. Moreover, imperfect CSI may lead to suboptimal port selection and degraded effective channel gains, thereby increasing the aggregation MSE. Incorporating these practical aspects into the analysis is an important direction for future work.}} Accordingly, the effective channel gain of $\mathrm{UE}_k$ is defined as
\begin{equation}
\label{eq:g_k}
g_k=\max_{1\le n\le N} g_{k,n},
\end{equation}
where $g_{k,n}\triangleq|h_{k,n}|^2$ denotes the channel power gain associated with the
$n$-th FA port. Let $n_k^\star$ denote the index of the selected port of $\mathrm{UE}_k$,
i.e.,
\begin{equation}
	n_k^\star
	=
	\arg\max_{1 \le n \le N} |h_{k,n}|^2 .
\end{equation}
Let $h_k \triangleq h_{k,n_k^\star}$ denote the effective channel after FA port selection. In the sequel, all expressions are written in terms of $h_k$ for notational clarity.
Accordingly, the received signal at the AP is expressed as
\begin{equation}
	\mathbf{y} = \sum_{k=1}^{K} p_{k}\, h_{k}\, \mathbf{s}_{k} + \mathbf{z},
\end{equation}
where $\mathbf{z} \sim \mathcal{CN}(0, \sigma^2 \mathbf{I}_r)$ denotes additive white Gaussian noise (AWGN), and $p_k$ denotes the complex transmit scaling coefficient of $\mathrm{UE}_k$. Each UE satisfies the average power constraint~\cite{chen2023joint}
\begin{equation}
\label{power_constraint}
\frac{1}{r}\,\mathbb{E}\!\left[\|p_k \mathbf{s}_k\|^2\right]
= |p_k|^2 \;\le\; p_{\max}, \qquad \forall k \in \mathcal{K}.
\end{equation}

The AP estimates the aggregated signal using a normalization factor $\rho$ as
\begin{equation}
	\label{rt}
	\hat{\mathbf{s}}
	=\frac{1}{\sqrt{\rho}}\sum_{k=1}^{K}
	p_{k} h_{k} \mathbf{s}_{k}
	+\frac{1}{\sqrt{\rho}} \mathbf{z}.
\end{equation}

The aggregation error is quantified by the MSE, defined as
\begin{equation}
	\label{mse}
		\mathrm{MSE}
\triangleq
\frac{1}{r}\mathbb{E}\!\left[\left\| \mathbf{s} - \hat{\mathbf{s}} \right\|^2 \right]
		=
		\sum_{k=1}^{K}
		\left| 1 - \frac{1}{\sqrt{\rho}}\, p_{k} h_{k} \right|^{2}
		+ \frac{\sigma^{2}}{\rho}.
\end{equation}

Following~\cite{chen2023joint}, a zero-forcing transmit scaling is adopted to eliminate the signal misalignment across UEs, given by
\begin{equation}
	\label{pk}
	p_{k} = \sqrt{\rho}\frac{h_{k}^H}{|h_{k}|^2}.
\end{equation}
To satisfy the power constraint in \eqref{power_constraint}, the normalization factor must
satisfy
\begin{equation}
\label{ro}
	\rho \le  p_{\max} |h_{k}|^2, \quad \forall k.
\end{equation}

Substituting \eqref{pk} and \eqref{ro} into \eqref{mse}, the resulting aggregation MSE
admits the following closed-form expression~\cite{pakravan2024robust, ahmadzadeh2025ai}:
\begin{equation}
	\label{lll}
	\mathrm{MSE}
	= \frac{\sigma^2}{p_{\max}}
	\max_{k \in \mathcal{K}}
	\frac{1}{|h_{k}|^2}.
\end{equation}
{This expression shows that the aggregation MSE is governed by the user with the weakest effective channel gain, i.e., the minimum $|h_k|^2$. This bottleneck effect arises from channel inversion–based AirComp, where all users align their signals under a common scaling constrained by the worst channel. Consequently, the overall accuracy is limited by the weakest link. This highlights the importance of improving the minimum effective channel gain. In this context, FA-enabled users mitigate this effect by exploiting position-domain diversity to select favorable ports, thereby enhancing their effective channel gains and reducing the MSE.}

{In the following, we build on this expression to statistically characterize the aggregation MSE under spatially correlated fading.
}

\section{Theoretical Analysis}

{
This section provides a statistical characterization of the aggregation MSE for the proposed FA-assisted AirComp system. The analysis starts from the closed-form expression in \eqref{lll} and examines the distribution of the effective channel gains under spatially correlated fading. By incorporating the dependence among FA ports via copula theory, the CDF of the aggregation MSE is derived in a tractable form.}

For analytical convenience, we define the auxiliary random variable
\begin{equation}
	X_k \triangleq \frac{\sigma^2}{p_{\max} g_k}.
\end{equation}
Using \eqref{lll}, the CDF of the MSE can be expressed as
\begin{equation}
\small
	\label{eq:MSE_CDF}
	F_{\mathrm{MSE}}(\tau)
	= \Pr(\mathrm{MSE}<\tau)
	= \Pr\!\left(\max_{k\in\mathcal{K}} X_k < \tau\right),
\end{equation}
where $\tau > 0$ represents a prescribed MSE threshold corresponding to the
desired aggregation accuracy. Assuming independent fading across UEs, \eqref{eq:MSE_CDF} simplifies to\footnote{{Independent fading across UEs is assumed for tractability. Inter-user correlation in dense scenarios may reduce diversity and degrade performance due to the weakest-user effect. Its incorporation is left for future work.}} 
\begin{equation}
	F_{\mathrm{MSE}}(\tau)
	= \prod_{k=1}^{K} \Pr(X_k<\tau)
	= \big(F_{X_k}(\tau)\big)^K .
\end{equation}
The CDF of $X_k$ is obtained as
\begin{equation}
	\label{eq:F_Xk}
	F_{X_k}(x)
	= \Pr\!\left(g_k>\frac{\sigma^2}{p_{\max}x}\right)
	= 1 - F_{g_k}\!\left(\frac{\sigma^2}{p_{\max}x}\right),
\end{equation}
where $F_{g_k}(\cdot)$ denotes the CDF of the effective FA channel power gain.

Under Rayleigh fading, the channel power gain at each FA port follows an exponential distribution with unit mean, i.e., $g_{k,n}\sim\mathrm{Exp}(1)$, yielding the marginal CDF
$
F_{g_{k,n}}(x)=1-e^{-x}, \; x\ge0.
$
Due to the compact aperture of the FA, the gains $\{g_{k,n}\}_{n=1}^{N}$ are generally spatially correlated. To accurately capture this dependence structure while preserving the marginal
distributions, we adopt a copula-based statistical modeling framework.

\begin{theorem}[Sklar’s Theorem~\cite{nelsen2006introduction}]
Let $\mathbf{X}=(X_1,\ldots,X_J)$ be a random vector with joint CDF $F_{\mathbf{X}}$ and continuous marginal CDFs $F_{X_j}$. Then, there exists a unique copula function $C:[0,1]^J\rightarrow[0,1]$ such that
\begin{equation}
	\label{eq:sklar}
	F_{\mathbf{X}}(x_1,\ldots,x_J)
	=
	C\!\big(F_{X_1}(x_1),\ldots,F_{X_J}(x_J)\big).
\end{equation}
\end{theorem}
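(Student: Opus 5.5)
We prove Sklar's theorem in the continuous-marginal case through the probability integral transform. The copula is constructed explicitly as the joint CDF of the transformed vector. Uniqueness then follows because the marginals are surjective onto $(0,1)$.

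\textbf{Step 1 (probability integral transform).} For each $j$, set $U_j \triangleq F_{X_j}(X_j)$. Because $F_{X_j}$ is continuous, $U_j$ is uniform on $[0,1]$. To show this, introduce the generalized inverse $F_{X_j}^{-1}(u)\triangleq\inf\{x:F_{X_j}(x)\ge u\}$. Continuity gives $F_{X_j}(F_{X_j}^{-1}(u))=u$ for $u\in(0,1)$. For such $u$ this yields
\begin{equation*}
\Pr(U_j\le u)=\Pr\big(X_j\le F_{X_j}^{-1}(u)\big)=F_{X_j}\big(F_{X_j}^{-1}(u)\big)=u .
\end{equation*}
We will also need that $F_{X_j}^{-1}(F_{X_j}(X_j))=X_j$ almost surely. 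The two sides can differ only when $X_j$ lies in an interval on which $F_{X_j}$ is flat, and such an interval has probability zero.

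\textbf{Step 2 (existence).} Define $C(u_1,\ldots,u_J)\triangleq\Pr(U_1\le u_1,\ldots,U_J\le u_J)$. This is a joint CDF on $[0,1]^J$ with uniform marginals, so by definition it is a copula. Fix $(x_1,\ldots,x_J)$. Since $F_{X_j}$ is nondecreasing, $\{X_j\le x_j\}\subseteq\{U_j\le F_{X_j}(x_j)\}$. The set difference is contained in $\{X_j>x_j,\ F_{X_j}(X_j)=F_{X_j}(x_j)\}$. On that event $X_j$ lies in a flat region of $F_{X_j}$ to the right of $x_j$, so the event has probability zero. Hence the two events agree almost surely for every $j$, and intersecting over $j$ gives $F_{\mathbf{X}}(x_1,\ldots,x_J)=C(F_{X_1}(x_1),\ldots,F_{X_J}(x_J))$, which is \eqref{eq:sklar}.

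\textbf{Step 3 (uniqueness).} Suppose another copula $\tilde C$ also satisfies \eqref{eq:sklar}. By continuity and the intermediate value theorem, each $F_{X_j}$ maps $\mathbb{R}$ onto $(0,1)$, since its limits are $0$ and $1$. Any point of $(0,1)^J$ can therefore be written as $(F_{X_1}(x_1),\ldots,F_{X_J}(x_J))$, and $C=\tilde C=F_{\mathbf{X}}$ composed with these preimages there. Two facts extend the equality to the closed cube $[0,1]^J$. First, every copula is Lipschitz: $|C(\mathbf{u})-C(\mathbf{v})|\le\sum_j|u_j-v_j|$, which follows from the uniform marginals. Second, $(0,1)^J$ is dense in $[0,1]^J$.

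\textbf{Main obstacle.} The only delicate point is the measure-zero handling in Steps 1 and 2. Continuous CDFs need not be strictly increasing, so $F_{X_j}^{-1}$ is only a left inverse, and the event identities hold almost surely rather than pointwise. I will address this by showing directly that flat regions of $F_{X_j}$ carry no probability mass.

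In the paper's setting the marginals are $\mathrm{Exp}(1)$, so each $F_{X_j}$ is strictly increasing on $[0,\infty)$ and this subtlety disappears.
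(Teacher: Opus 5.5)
Your proof is correct, but the paper gives no proof of this statement to compare against. Sklar's theorem is quoted from Nelsen's book and used only to justify inserting the $\mathrm{Exp}(1)$ marginals into the Gumbel copula in Lemma~\ref{lemma:gumbel_cdf}.

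The proof in that reference is analytic rather than probabilistic:
\begin{itemize}
\item One first proves the bound $|F_{\mathbf{X}}(\mathbf{x})-F_{\mathbf{X}}(\mathbf{y})|\le\sum_{j}|F_{X_j}(x_j)-F_{X_j}(y_j)|$.
\item This makes $C'(F_{X_1}(x_1),\ldots,F_{X_J}(x_J))\triangleq F_{\mathbf{X}}(x_1,\ldots,x_J)$ a well-defined subcopula on $\mathrm{Ran}F_{X_1}\times\cdots\times\mathrm{Ran}F_{X_J}$.
\item The subcopula is then extended to $[0,1]^J$ by multilinear interpolation.
\end{itemize}
For continuous marginals on the extended real line, every range is already $[0,1]$. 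So the extension step is vacuous and uniqueness is immediate.

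Your route instead identifies $C$ as the joint CDF of the probability-integral-transformed vector. This makes existence and the copula axioms essentially free. It also matches the viewpoint of the paper's Appendix~A, where $g_{k,n}$ is obtained from copula samples by the inverse transform. Your flat-region argument plays the same role as the Lipschitz bound above: both say that if $F_{X_j}(x_j)=F_{X_j}(x_j')$, changing only the $j$th coordinate from $x_j$ to $x_j'$ leaves $F_{\mathbf{X}}$ unchanged. What the analytic route buys is generality. It gives existence for arbitrary marginals, with uniqueness on the product of ranges, whereas $F_{X_j}(X_j)$ stops being uniform once $F_{X_j}$ has atoms.

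There are three small slips, none affecting validity:
\begin{itemize}
\item The first equality in your Step~1 display, $\Pr(U_j\le u)=\Pr(X_j\le F_{X_j}^{-1}(u))$, itself needs the flat-region argument. The two events differ on $\{X_j>F_{X_j}^{-1}(u),\,F_{X_j}(X_j)=u\}$, and you do not say this there.
\item The identity $F_{X_j}^{-1}(F_{X_j}(X_j))=X_j$ a.s.\ that you announce is never used. Step~2 compares events directly, and with $J=1$ that comparison already gives uniformity of $U_j$.
\item In Step~3 the range of $F_{X_j}$ need only \emph{contain} $(0,1)$; it need not equal it (the $\mathrm{Exp}(1)$ CDF attains $0$). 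Containment is all your argument uses.
\end{itemize}
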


Among existing copula families, the Gumbel copula is particularly suitable for the present FA-assisted AirComp setting because FA port selection is governed by the maximum channel gain across ports. Hence, the resulting performance is primarily influenced by the upper-tail dependence of the joint channel distribution. 
{In particular, the Gumbel copula captures upper-tail dependence, enabling accurate modeling of extreme (high-gain) channel realizations that dominate the selection process. In contrast, Gaussian copulas do not exhibit tail dependence and thus cannot effectively characterize such extreme-value behavior, while Clayton copulas emphasize lower-tail dependence, which is less relevant for the considered maximum-gain selection mechanism~\cite{10678877, silva2008copula}.
Therefore, a copula with explicit upper-tail dependence is more appropriate for modeling the dependence structure in FA-assisted AirComp.}

Specifically, we employ the Archimedean Gumbel copula of dimension $d$, defined as~\cite{nelsen2006introduction}
\begin{equation}
\small
	\label{eq:gumbel_copula}
	C_{\mathrm{G}}(u_1,\ldots,u_d)
	=
	\exp\!\left(
	-
	\Bigg(
	\sum_{i=1}^{d}(-\ln u_i)^{\theta}
	\Bigg)^{1/\theta}
	\right),
\end{equation}
where $\theta \ge 1$ is the dependence parameter controlling the strength of
spatial correlation, and
$u_i = F_{X_i}(x_i) \in [0,1]$ denotes the probability–integral–transformed marginal
of the $i$-th FA port channel gain. 
{It is important to note that the copula dependence parameter is not a direct physical quantity, but rather a statistical measure of dependence among FA-port channel gains. To relate $\theta$ to physical channel characteristics, it can be calibrated through Kendall’s $\tau$, which provides a well-established link between statistical dependence and correlation observed in spatial channel models or measurements. For the Gumbel copula, Kendall’s $\tau$ is related to the dependence parameter through $\tau = 1 - \frac{1}{\theta}$, establishing a direct mapping between correlation levels and $\theta$. In practice, this mapping can be implemented by estimating the empirical Kendall’s $\tau$ from simulated or measured FA-port channel samples, and then obtaining the corresponding value of $\theta$. This procedure provides a practical bridge between physically observed spatial correlation and the copula-based dependence model~\cite{nelsen2006introduction}. In particular, the level of dependence is influenced by the spatial configuration of FA ports, where larger apertures or wider port spacing typically lead to weaker correlation, while closely spaced ports result in stronger dependence and 
larger $\theta$.
}

Using the Gumbel copula model, the distribution of the effective FA channel gain
can be characterized as follows.

\begin{lemma}
	\label{lemma:gumbel_cdf}
	Let $\{g_{k,n}\}_{n=1}^{N}$ be exponentially distributed random variables with unit mean, whose dependence is modeled by a Gumbel copula with parameter $\theta\ge1$. Then, the CDF of
	$
	g_k=\max\{g_{k,1},\ldots,g_{k,N}\}
	$
	is given by
	\begin{equation}
    \small
		\label{eq:gk_cdf}
		F_{g_k}(x) =
\exp\!\left(
- \left(
\sum_{n=1}^{N}
\bigl[-\ln(1-e^{-x})\bigr]^{\theta}
\right)^{1/\theta}
\right).
	\end{equation}
\end{lemma}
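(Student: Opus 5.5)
The plan is to reduce the CDF of the maximum to the joint CDF of the port gains and then apply Sklar's Theorem with the Gumbel copula \eqref{eq:gumbel_copula}. The key observation is that, for any $x\ge 0$, the event $\{g_k\le x\}$ coincides with $\{g_{k,1}\le x,\ldots,g_{k,N}\le x\}$, since the maximum of finitely many numbers lies below $x$ exactly when every one of them does. Therefore
\begin{equation*}
F_{g_k}(x)=\Pr\big(g_{k,1}\le x,\ldots,g_{k,N}\le x\big)=F_{\mathbf{g}_k}(x,\ldots,x),
\end{equation*}
where $F_{\mathbf{g}_k}$ denotes the joint CDF of $(g_{k,1},\ldots,g_{k,N})$. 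Because the marginals are continuous, strict versus non-strict inequalities make no difference.

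Next I would invoke Sklar's Theorem \eqref{eq:sklar} with $J=d=N$. By the modeling assumption, the unique copula of this vector is the Gumbel copula $C_{\mathrm{G}}$ with parameter $\theta$. This gives $F_{g_k}(x)=C_{\mathrm{G}}\big(F_{g_{k,1}}(x),\ldots,F_{g_{k,N}}(x)\big)$. Each marginal is $\mathrm{Exp}(1)$, so every argument equals $u_n=1-e^{-x}$. Substituting into \eqref{eq:gumbel_copula} yields
\begin{equation*}
F_{g_k}(x)=\exp\!\left(-\left(\sum_{n=1}^{N}\bigl[-\ln(1-e^{-x})\bigr]^{\theta}\right)^{1/\theta}\right),
\end{equation*}
which is \eqref{eq:gk_cdf}. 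The sum could be further simplified to $N^{1/\theta}\bigl[-\ln(1-e^{-x})\bigr]$, so that $F_{g_k}(x)=(1-e^{-x})^{N^{1/\theta}}$. This simplified form is optional, but it makes the limiting cases easy to verify.

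There is no substantial obstacle; the only points needing care are the boundary values of $x$. At $x=0$ we have $u_n=0$, and $-\ln 0=+\infty$ makes the expression evaluate to $0$, matching $F_{g_k}(0)=0$. As $x\to\infty$, $u_n\to1$ and the expression tends to $1$. I would also check consistency with the limiting cases: $\theta=1$ (independence) gives $(1-e^{-x})^N$, and $\theta\to\infty$ (comonotonicity) gives $1-e^{-x}$, a single-port behavior.
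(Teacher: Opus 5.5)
Your proof is correct and follows the same route as the paper, which applies Sklar's theorem to the joint distribution of $\{g_{k,n}\}_{n=1}^{N}$ and substitutes the marginal $1-e^{-x}$ into the Gumbel copula; you also state explicitly the step $F_{g_k}(x)=F_{\mathbf{g}_k}(x,\ldots,x)$, which the paper leaves implicit. Your simplification to $(1-e^{-x})^{N^{1/\theta}}$ is correct, and it is exactly the $N^{1/\theta}$ effective-diversity scaling the paper mentions right after the lemma.
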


\begin{proof}
	The result follows by applying Sklar’s theorem to the joint distribution of $\{g_{k,n}\}_{n=1}^{N}$ and substituting the marginal CDF $F_{g_{k,n}}(x)=1-e^{-x}$ into the Gumbel copula in \eqref{eq:gumbel_copula}.
\end{proof}
{The structure of \eqref{eq:gk_cdf} shows that spatial correlation is captured through an exponent-like scaling across the $N$ FA ports, effectively behaving as $N^{1/\theta}$. While $\theta=1$ yields full diversity with linear scaling in $N$, increasing $\theta$ reduces the effective scaling and thus the achievable diversity gain.}


Substituting \eqref{eq:gk_cdf} into \eqref{eq:F_Xk} and \eqref{eq:MSE_CDF}, the CDF of the aggregation MSE is obtained in closed form as
\begin{equation}
\small
	\label{eq:MSE_CDF_final}
	F_{\mathrm{MSE}}(\tau)
	=
	\Bigg[
	1 -
	\exp\!\Bigg(
	-
	\Bigg(
	\sum_{n=1}^{N}
	\big(-\ln(1-e^{-\sigma^2/(p_{\max}\tau)})\big)^{\theta}
	\Bigg)^{1/\theta}
	\Bigg)
	\Bigg]^K.
\end{equation}
The dependence parameter $\theta$ plays a key role in determining
AirComp performance. When $\theta = 1$, the Gumbel copula reduces to the
independence case, corresponding to uncorrelated FA ports. In this regime, the system achieves the maximum spatial diversity gain and the minimum aggregation
MSE.

As $\theta$ increases, positive dependence among FA ports becomes stronger,
which reduces the effective diversity gain obtained through port selection and
leads to degraded MSE performance. In the limiting case $\theta \to \infty$, the
Gumbel copula converges to the Fr\'echet--Hoeffding upper bound, representing
perfect positive dependence. Under this condition, all FA ports experience
identical fading realizations, no spatial diversity is available, and the
FA-assisted AirComp system reduces to the performance of a conventional
FPA system. This scenario corresponds to the worst-case
aggregation MSE under severe spatial correlation.

\section{Numerical Results}

In this section, we evaluate the performance of the proposed FA-assisted AirComp framework through MC simulations and investigate the impact of spatial channel correlation on the aggregation MSE. Unless otherwise stated, all results are obtained by averaging over $10^4$ independent MC realizations.

We consider an uplink AirComp scenario with $K=10$ single-antenna UEs, each equipped with a FA system with $N=10$ selectable ports. These parameters are consistent with recent FA-assisted AirComp studies~\cite{11106811, ahmadzadeh2025ai, 10678877}. The proposed scheme is benchmarked against a conventional FPA system, which corresponds to the fully correlated case. {For consistency with the normalized analytical model, the noise power is set to $\sigma^2=1$, and the transmit power budget is expressed in the corresponding normalized linear scale. Unless otherwise stated, we set $p_{\max}=10$, corresponding to a 10-dB transmit-power-to-noise ratio. The spatial correlation among FA ports is controlled by the copula parameter $\theta$, through which different correlation regimes are examined in the numerical results.}

Spatial correlation among FA ports is modeled using an Archimedean copula-based channel generation framework, which
enables flexible characterization of statistical dependence structures beyond conventional Gaussian correlation models
\cite{nelsen2006introduction}. The detailed random variate generation procedure based on standard Archimedean copula sampling is provided in Appendix~A.

By tuning the copula dependence parameter $\theta$, different spatial correlation regimes can be systematically examined. Specifically, $\theta = 1$ corresponds to independent fading across antenna ports and yields the maximum achievable spatial diversity. Increasing values of $\theta$ indicate progressively stronger positive spatial correlation. In the limiting case $\theta \rightarrow \infty$, the copula converges to the Fr\'echet--Hoeffding upper bound, modeling the conventional FPA system in which all antenna ports experience identical fading
and no diversity gain is available.

Fig.~\ref{fig:copula_a} compares the analytical CDF derived in \eqref{eq:MSE_CDF_final} with the empirical CDF obtained from MC simulations for different target thresholds $\tau$. The close agreement between theory and simulations confirms the accuracy of the proposed copula-based analytical framework. Moreover, the aggregation MSE increases monotonically with $\theta$, indicating that stronger spatial correlation reduces the effective diversity gain provided by FA port selection. Nevertheless, the FA-assisted AirComp scheme consistently outperforms the conventional FPA baseline across all considered correlation levels.

Fig.~\ref{fig:copula_b} illustrates the impact of the number of FA ports $N$ on AirComp performance for a fixed threshold $\tau=0.3$ and $K=10$ UEs. When $N=1$, the system reduces to the conventional FPA case. As $N$ increases, the aggregation MSE decreases due to the increased spatial diversity provided by the FA. However, the performance improvement gradually saturates as $N$ becomes large. This behavior can be attributed to the spatial correlation among FA ports within the confined aperture. Specifically, although increasing $N$ enlarges the candidate set for port selection, the corresponding channel gains become increasingly correlated, which limits the effective diversity gain. As a result, the improvement in the effective channel gain $g_k$ becomes marginal beyond a certain point, leading to a saturation in the MSE performance. Furthermore, the independent fading scenario achieves the best performance, while stronger spatial correlation progressively limits the achievable gains.

Fig.~\ref{fig:copula_c} depicts the effect of the number of UEs $K$ on the FA-assisted AirComp performance for $\tau=0.3$ and $N=10$. As $K$ increases, the aggregation MSE degrades since the overall performance is dominated by the weakest effective channel among the users. The conventional FPA system consistently exhibits the poorest performance, underscoring the advantage of FA deployment. Furthermore, the impact of spatial correlation becomes more pronounced as the number of users grows, highlighting the importance of correlation-aware FA design in large-scale AirComp systems.

\begin{figure*}[!t]
    \centering
    \subfloat[]{
        \includegraphics[width=0.31\textwidth,height=0.19\textwidth]{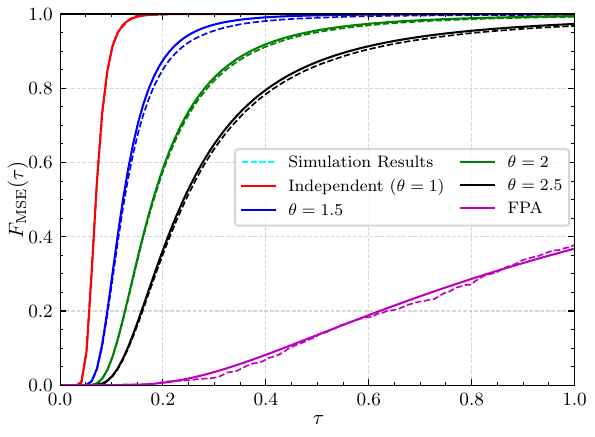}
        \label{fig:copula_a}
    }\hfill
    \subfloat[]{
        \includegraphics[width=0.31\textwidth,height=0.19\textwidth]{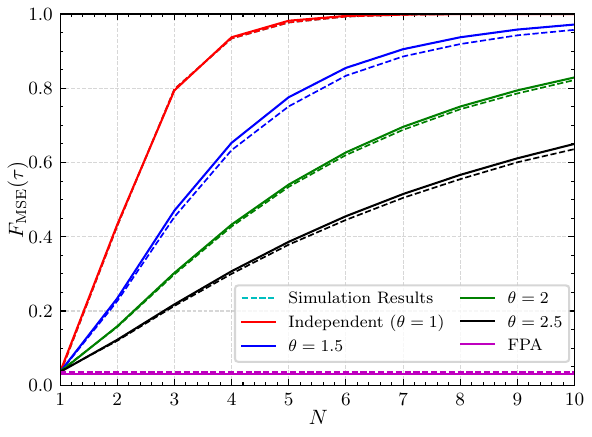}
        \label{fig:copula_b}
    }\hfill
    \subfloat[]{
        \includegraphics[width=0.31\textwidth,height=0.19\textwidth]{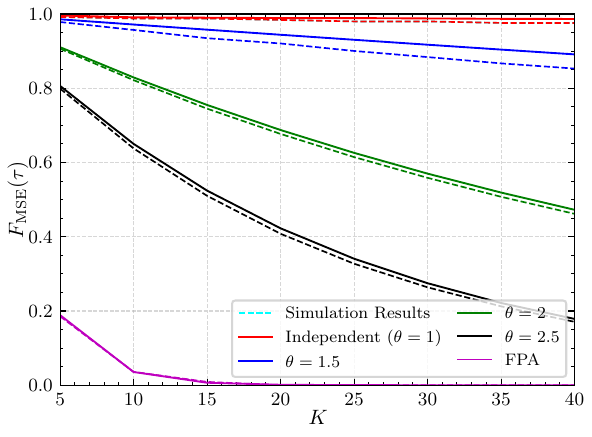}
        \label{fig:copula_c}
    }
    \caption{Statistical characterization of the aggregation MSE under copula-based spatial correlation: (a) CDF of the aggregation MSE versus the target threshold $\tau$; (b) CDF of the aggregation MSE versus the number of FA ports $N$; (c) CDF of the aggregation MSE versus the number of users $K$.}
    \label{fig:copula_results}
\end{figure*}

\section{Conclusion}

This letter developed a copula-based analytical framework for FA-assisted AirComp under spatially correlated fading. Using the Gumbel copula to capture the upper-tail dependence relevant to FA port selection, we derived a closed-form CDF for the aggregation MSE and validated it via simulations. The results showed that spatial dependence fundamentally limits the diversity gain offered by FA selection, while FA-assisted AirComp still achieves clear performance gains over conventional FPA systems. These findings provide useful design insights for correlation-aware AirComp in compact wireless devices. Future work may extend the proposed framework to imperfect CSI, inter-user channel dependence, and more general fading environments.

{
\appendices
\section{}\label{app:copula}

To generate spatially correlated FA-port channel gains, we adopt the frailty representation of the Archimedean Gumbel copula\cite{nelsen2006introduction}, which enables efficient sampling of dependent random variables with prescribed marginals and dependence structure. 

For the Gumbel copula with dependence parameter $\theta \geq 1$, the generator is given by
\begin{equation}
	\phi(t)=(-\ln t)^{\theta}, \quad t \in (0,1],
\end{equation}
and the corresponding $N$-dimensional copula is expressed as
\begin{equation}
\small
	C_G(u_1,\ldots,u_N)
	=
	\exp\left(
	-\left[
	\sum_{n=1}^{N}(-\ln u_n)^\theta
	\right]^{1/\theta}
	\right).
\end{equation}
Let $\alpha=1/\theta$. For each UE, generate an independent positive $\alpha$-stable random variable $V_k$ whose Laplace transform satisfies
\begin{equation}
	\mathbb{E}\left[e^{-qV_k}\right]=\exp(-q^{\alpha}),
	\qquad q \geq 0.
\end{equation}
Next, generate $N$ independent exponential random variables
$E_{k,n}\sim \mathrm{Exp}(1)$, $n=1,\ldots,N$. The correlated copula samples for UE $k$ are then obtained as
\begin{equation}
	u_{k,n}
	=
	\exp\left[
	-\left(\frac{E_{k,n}}{V_k}\right)^{\alpha}
	\right],
	\qquad n=1,\ldots,N.
\end{equation}

Finally, the correlated Rayleigh fading channel power gains are obtained through inverse transform sampling as
\begin{equation}
	g_{k,n}
	=
	-\ln(1-u_{k,n}),
	\qquad n=1,\ldots,N,
\end{equation}
which ensures that each $g_{k,n}$ follows an exponential distribution with unit mean while preserving the Gumbel dependence structure across FA ports.

Independent realizations of $V_k$ are generated for different UEs, ensuring independence across users, which is consistent with the system model assumptions.
}

	\bibliographystyle{IEEEtran}
	\bibliography{Main_Document}

@ARTICLE{10538293,
	author={Wang, Zhibin and others},
	journal={IEEE Internet Things J.}, 
	title={Over-the-Air Computation for \text{6G}: Foundations, Technologies, and Applications}, 
	month={May},
	year={2024},
	volume={11},
	number={14},
	pages={24634-24658}}

@ARTICLE{10092857,
  author={Şahin, Alphan and others},
  journal={IEEE Commun. Surveys Tuts.}, 
  title={A Survey on Over-the-Air Computation},
month={Apr.},
  year={2023},
  volume={25},
  number={3},
  pages={1877-1908}}

@ARTICLE{9535447,
  author={Zhu, Guangxu and others},
  journal={IEEE Wireless Commun.}, 
  title={Over-the-Air Computing for Wireless Data Aggregation in Massive {IoT}},
month={Sep.},
  year={2021},
  volume={28},
  number={4},
  pages={57-65}}

@ARTICLE{10974462,
  author={Xiao, Yue and others},
  journal={IEEE Trans. Wireless Commun.}, 
  title={{RIS}-Assisted Multi-Cell Over-the-Air Computation}, 
month={Apr.},
  year={2025},
  volume={24},
  number={9},
  pages={7437-7452}}

@article{Pakravan20261,
  author  = {Pakravan, Saeid and others},
  title   = {Fluid Antenna Systems under Channel Uncertainty and Hardware Impairments: Trends, Challenges, and Future Research Directions},
  journal = {arXiv preprint arXiv:2601.22989},
  year    = {2026},
  month   = {Jan.}
}

@ARTICLE{10623405,
  author={Ramírez-Espinosa, Pablo and others},
  journal={IEEE Trans. Wireless Commun.}, 
  title={A New Spatial Block-Correlation Model for Fluid Antenna Systems}, 
month={Aug.},
  year={2024},
  volume={23},
  number={11},
  pages={15829-15843}}

@ARTICLE{10678877,
  author={Rostami Ghadi, Farshad and others},
  journal={IEEE Trans. Wireless Commun.}, 
  title={A Gaussian Copula Approach to the Performance Analysis of Fluid Antenna Systems}, 
month={Sep.},
  year={2024},
  volume={23},
  number={11},
  pages={17573-17585}}

@ARTICLE{10319727,
  author={Hou, Yanzhao and others},
  journal={IEEE Wireless Commun. Lett.}, 
  title={A Copula-Based Approach to Performance Analysis of Fluid Antenna System With Multiple Fixed Transmit Antennas}, 
month={Nov.},
  year={2024},
  volume={13},
  number={2},
  pages={501-504}}

@inproceedings{pakravan2024robust,
	title={Robust resource allocation for over-the-air computation networks with fluid antenna array},
	author={Pakravan, Saeid and others},
	booktitle={Proc. IEEE Globecom Workshops},
	month={},
	pages={},
	year={},
note={{C}ape {T}own, {S}outh {A}frica, pp. 1--6, Sep. 2024.}
}

@inproceedings{saeidpwcnc,
	title={Enhanced Over-the-Air Federated Learning Using {AI}-Based Fluid Antenna System},
	author={Ahmadzadeh, Mohsen and others},
	booktitle={Proc. IEEE WCNC},
	pages={},
	month={},
	year={},
	note={{M}ilan, Italy, pp. 1--6, May 2025.}
}

@article{ahmadzadeh2025ai,
	title={{AI}-based fluid antenna design for client selection in over-the-air federated learning},
	author={Ahmadzadeh, Mohsen and others},
	journal={IEEE Internet Things J.},
	volume={12},
	number={20},
	pages={42549--42558},
	year={2025},
	month= {Aug.},
	publisher={IEEE}
}

@article{zhang2024fluid,
	title={Fluid antenna array enhanced over-the-air computation},
	author={Zhang, Deyou and others},
	journal={IEEE Commun. Lett.},
	volume={13},
	number={6},
	pages={1541--1545},
	month={Mar.},
	year={2024},
	publisher={IEEE}
}

@ARTICLE{huangfu2025performance,
  author={Huangfu, Jiangsheng and others},
  journal={IEEE Trans. Wireless Commun.}, 
  title={Performance Analysis of Fluid Antenna System Under Spatially-Correlated Rician Fading Channels}, 
month={Jul.},
  year={2025},
  volume={25},
  number={},
  pages={1394-1407}}

@article{chen2023joint,
	title={Joint client selection and receive beamforming for over-the-air federated learning with energy harvesting},
	author={Chen, Caijuan and others},
	journal={IEEE Open J. Commun. Soc.},
	volume={7},
	volume={4},
	pages={1127--1140},
	year={2023},
	month={May},
	publisher={IEEE}
}

@book{nelsen2006introduction,
	title={An introduction to copulas},
	author={Nelsen, Roger B},
	year={2006},
	publisher={Springer}
}

@article{li2023integrated,
	title={Integrated sensing, communication, and computation over-the-air: {MIMO} beamforming design},
	author={Li, Xiaoyang and others},
	journal={IEEE Trans. Wireless Commun.},
	volume={22},
	number={8},
	pages={5383--5398},
	month={Jan.},
	year={2023},
	publisher={IEEE}
}

@ARTICLE{10906511,
  author={Zhu, Lipeng and others},
  journal={IEEE Commun. Surveys Tuts}, 
  title={A Tutorial on Movable Antennas for Wireless Networks}, 
  year={2026},
  month={Feb.},
  volume={28},
  number={},
  pages={3002-3054}}

@ARTICLE{10729877,
  author={Li, Nianzu and others},
  journal={IEEE Wireless Commun. Lett.}, 
  title={Over-the-Air Computation via {2-D} Movable Antenna Array}, 
  year={2025},
  month={Jan.},
  volume={14},
  number={1},
  pages={33-37}}

@article{silva2008copula,
  author    = {Silva, Ralph dos Santos and others},
  title     = {Copula, marginal distributions and model selection: A Bayesian note},
  journal   = {Statist. Comput.},
  volume    = {18},
  number    = {3},
  pages     = {313--320},
  year      = {2008},
  month     = {Mar.}
}

@ARTICLE{11106811,
  author={Pakravan, Saeid and others},
  journal={IEEE Trans. Veh. Technol.}, 
  title={Fluid Antenna-Assisted Uplink {NOMA} Networks Under Imperfect {SIC}}, 
  year={2026},
  month     = {Jan.},
  volume={75},
  number={1},
  pages={1689-1694}}

	\vfill
	
\end{document}